\documentclass[12pt]{amsart}
\usepackage[english]{babel}
\usepackage{amsmath,amsthm,amsfonts,amssymb,epsfig,color,ulem,vector,hyperref}
\usepackage[left=1in,top=1in,right=1in]{geometry}

%%%%%%%%%%%%%%%%%%%%%%%%%%%%%%%%%%%%%%%%%%%%%%%%%%%

\newtheorem{thm}{Theorem}[section]
\newtheorem{lem}[thm]{Lemma}

\newtheorem{df}[thm]{Definition}

\newtheorem{ass}[thm]{Assumption}

\newcommand{\R}{\mathbb{R}}
\newcommand{\Z}{\mathbb{Z}}

\newcommand{\F}{\mathcal{F}}

%%%%%%

%%%%%%%%%%%%%%%%%%%%%%%%%%%%%%%%%%%%%%%%%%%%%%%%%%%
%%%
\begin{document}

\title{Zero-Temperature Fluctuations in Short-Range Spin Glasses}

\author[L.-P. Arguin]{L.-P. Arguin}            
 \address{L.-P. Arguin\\ 
 Department of Mathematics\\
 City University of New York, Baruch College and Graduate Center\\
 New York, NY 10010}
\thanks{The research of L.-P. A. is supported in part by NSF
Grant~DMS-1513441 and PSC-CUNY Research Award~68784-00~46.}
\email{louis-pierre.arguin@baruch.cuny.edu}

\author[C.M. Newman]{C.M. Newman}            
 \address{C.M. Newman\\ 
 Courant Institute of Mathematical Sciences\\
 New York, NY 10012 USA\\
  and NYU-ECNU Institute of Mathematical Sciences at NYU Shanghai\\
  3663 Zhongshan Road North, Shanghai 200062, China}
\thanks{The research of CMN is supported in part by U.S.~NSF Grant DMS-1207678.}
\email{newman@cims.nyu.edu}

\author[D.L.~Stein]{D.L.~Stein}            
 \address{D.L.~Stein\\ 
 Department\ of Physics and Courant Institute of Mathematical Sciences\\
  New York University\\
 	 New York, NY 10003, USA\\
	 and NYU-ECNU Institutes of Physics and Mathematical Sciences at NYU Shanghai\\
	  3663 Zhongshan Road North\\
	  Shanghai, 200062, China}
\thanks{The research of DLS is supported in part by U.S.~NSF Grant DMS-1207678.
A part of his work on this article was supported by a John Simon Guggenheim Foundation Fellowship.}
\email{daniel.stein@nyu.edu}

\author[J. Wehr]{J. Wehr}            
 \address{J. Wehr\\ 
Department of Mathematics\\
University of Arizona\\
Tucson, AZ 85721, USA
 }
\thanks{The research of JW is supported in part by U.S.~NSF Grant DMS-131271. A part of his work on this article was supported by U.S.~NSF 
Grant DMS-1440140 while he was in residence at the Mathematical Sciences Research Institute in Berkeley during the Fall 2015 semester.}
\email{wehr@math.arizona.edu}

%\author{L.-P.~Arguin
%\and
%C.M.~Newman
%\and
%D.L.~Stein
%\and
%J.~Wehr
%}

\date{}

%$^{1}$\small \sl \small \sl CHANGE Dept.~de Math\'ematique et Staistique, U.~de Montr\'eal, Montr\'eal, CANADA\\
%$^{2}${\small \sl ADD Courant Institute of Mathematical Sciences,~New York University,~New York,~NY 10003 USA}\\
%$^{3}${\small \sl ADD Department\ of Physics,~New York University,~New York,~NY 10003 USA}\\
%$^{4}${\small \sl Department\ of Mathematics,~University of Arizona,~Tucson, AZ 85721 USA\\

\keywords{Spin Glasses, Edwards-Anderson Model, Energy, Variance bounds} \subjclass[2010]{Primary: 82B44}

\maketitle

\begin{abstract}
We consider the energy difference restricted to a finite volume
for certain pairs of incongruent ground states  (if they exist) in
the $d$-dimensional Edwards-Anderson (EA) Ising spin glass at zero temperature.  We
prove that the variance of this quantity with
respect to the couplings grows at least proportionally to the volume
in any $d \ge 2$.  An essential aspect of our result is the use of the {\it excitation\/} metastate. As an illustration
of potential applications, we use this result to restrict the
possible structure of spin glass ground states in two dimensions.
\end{abstract}

\section{Introduction}
\label{sec:intro}

In a previous paper~\cite{ANSW14}, the authors considered the free energy difference
restricted to a finite volume for certain pairs of (putative) incongruent states~\cite{HF87,FH87} in the Edwards-Anderson 
Ising spin glass~\cite{EA75} at nonzero temperature, and proved that the variance of the 
free energy difference for a given pair grew linearly with the volume in any dimension
greater than or equal to two. The proof was restricted to incongruent states chosen from different metastates, or else
those chosen from the same metastate but with different nonzero weights in that metastate.

There are two directions in which the result in~\cite{ANSW14} might be generalizable. One direction is
to enlarge the class of incongruent pure state pairs to which the result applies. The second is to extend the result
to incongruent pairs at zero temperature. This paper addresses the latter.   Our main result is Theorem~\ref{thm:zerolower} below.  

While it seems intuitively reasonable that the result should remain valid at zero temperature, a number of technical issues prevent a straightforward extension 
of the proof in this direction. As will be shown below, the original proof can be thought of as comprising two logical components. 
The first component employs
natural covariance properties of the periodic boundary condition metastate~\cite{AW90,NS96,NS97,NS98}, along with translation-invariance of the coupling distribution,
to equate a suitably defined derivative of the free energy difference, with respect to a specific coupling, to the difference of the expectations of the corresponding (nearest neighbor) two-spin
correlation function in each of the incongruent pure states of the pair under study. The assumption of incongruence implies this difference will be of order one with positive probability
in the couplings.

The second component involves the construction of a martingale decomposition of the free energy difference. This involves dividing the volume into blocks whose size is
independent of the volume but large enough so that the
metastate average of the difference of the two-spin correlation function in the two incongruent states is nonzero. A sequence of
metastate averages of the free energy differences is then constructed, with
each element in the sequence conditioned on the couplings in an
increasing number of blocks. The variance of the free energy
difference over all the couplings in the volume is shown
to be no smaller than the sum of the variances of the differences
between two succeeding elements of the
sequence. The remainder of the proof uses the translation-covariant properties
of the metastate to show that the variance of the $k^{\rm
  th}$~difference is independent of $k$. Because (by the assumption of
existence of incongruence) the variance of the first block is of
order~one independent of the volume, the result follows.

In attempting to extend this result to zero temperature, a problem is encountered. Consider two ground
states distinguishable in a fixed volume $\Lambda\subset\mathbb{Z}^d$ centered at the origin. Now an issue arises in equating
the derivative of the energy difference with respect to a coupling with the two-spin expectation, because a small change in a coupling
can change the ground state if the coupling magnitude is close to its critical
value~\cite{NS00,NS01,ADNS10} for that ground state. What may appear at first as a small technical issue turns out
to require a complete revision of the first component of the proof (the second component, involving the martingale construction, then proceeds 
essentially unchanged). Indeed, a  complete treatment requires using a different construct entirely: the {\it excitation metastate\/}, introduced in~\cite{NS01} 
and developed further in~\cite{ADNS10}. In the next section we introduce the key concepts and definitions needed to
extend the result to zero temperature.

\section{Setup of the Problem}
\label{sec:setup}

As in~\cite{ANSW14}, we study the Edwards-Anderson~(EA) Hamiltonian~\cite{EA75} in a finite volume $\Lambda=[-L,L]^d\subset \Z^d$
centered at the origin:
\begin{equation}
\label{eq:EA}
H_{\Lambda, J}(\sigma)=-\sum_{(x,y)\in E(\Lambda)}J_{xy} \sigma_x\sigma_y, \qquad \sigma\in \{-1,1\}^{\Lambda} \ ,
\end{equation}
where $E(\Lambda)$ denotes the set of edges with both
endpoints in $\Lambda$.  The couplings $J_\Lambda:=(J_{xy}, (x,y)\in E(\Lambda))$
are i.i.d.~random variables sampled from a continuous distribution $\nu(dJ_{xy})$.  We assume throughout that $\nu(dJ_{xy})$
is symmetric, i.e., invariant under $J_{xy} \to -J_{xy}$,
and that  $\int \nu(dJ_{xy}) J_{xy}^4<\infty$. On every volume periodic boundary conditions are imposed, so
that ground states appear as spin-reversed pairs.
%(In fact, finiteness of $\int \nu(dJ_{xy}) J_{xy}^{2+\epsilon}, \epsilon>0$, is sufficient.)

Next, let $\Sigma=\{-1,+1\}^{\Z^d}$ and let 
$\mathcal M_1(\Sigma)$ be the set of (regular Borel) probability measures on $\Sigma$. 
An infinite-volume Gibbs state~$\Gamma$ for the 
Hamiltonian~\eqref{eq:EA} is an element of $\mathcal M_1(\Sigma)$ that
satisfies the DLR equations~\cite{DLR} for that Hamiltonian
(at a given inverse temperature $\beta$). 
For a function $f(\sigma)$ on the spins, we denote 
\begin{equation}
\label{eq:Gamma}
\Gamma(f(\sigma))=\int\ d\Gamma\  f(\sigma)\, .
\end{equation}
The set of Gibbs states corresponding to the coupling realization~$J$ is denoted by $\mathcal G_J$.

At zero temperature ($\beta \to \infty$), a Gibbs state is supported on (infinite-volume)
ground states --- i.e., on spin configurations such that any change of finitely many
spin variables gives a positive energy change. An infinite-volume ground state can also be considered as the
infinite-volume limit of a convergent (sub)sequence of finite-volume ground states, defined as 
the spin configuration pairs $\tilde\sigma_\Lambda\in \{-1,+1\}^\Lambda$ that minimize $H_{\Lambda,J}$ over $\{-1,1\}^\Lambda$. 
By our assumption on the distribution of the disorder, for almost all realizations of~$J$ the ground state pair in every volume is unique.

\section{The Differentiation Lemma and Fluctuation Bound}
\label{sec:diff}

In~\cite{ANSW14} we studied the quantity
\begin{equation}
\label{eq: F}
F_\Lambda(J,\Gamma,\Gamma')=\log \frac{\Gamma(\exp \beta H_{\Lambda,J}(\sigma))}{\Gamma'(\exp \beta H_{\Lambda,J}(\sigma'))}\ ,
\end{equation}
which is the difference of free energies restricted to a finite volume at fixed $\beta<\infty$
between two infinite-volume Gibbs states $\Gamma$ and $\Gamma'$ for
the Hamiltonian~\eqref{eq:EA}. Given two probability measures, denoted $\kappa_J$ and $\kappa'_J$, on $\mathcal M_1(\Sigma)$,
we considered the free energy fluctuations of $\Gamma$ and $\Gamma'$ independently chosen under the probability measure
\begin{equation}
\label{eq: M}
M:=\nu(dJ) \ \kappa_J (d\Gamma')\times \kappa'_J(d\Gamma)\ .
\end{equation}
The most natural measures $\kappa_J$ to consider are the ones that are obtained by taking
(deterministic) subsequential limits of finite-volume Gibbs measures,
as discussed in~\cite{AW90,NS96,NS97,NS98,AD11}.  In so doing the
$\kappa_J$'s will inherit useful invariance properties from the
finite-volume Gibbs measures, as will be discussed in the next section.
For now, we simply note that these properties led to a crucial lemma
(Lemma~5.4 in~\cite{ANSW14}) that showed smoothness properties of
the conditional expectation of the free energy difference given the couplings inside a given volume. 
In particular, we can compute its derivatives, as described in the following Lemma.
To state it, we write $M_{B^c}$ for the measure defined as $M$ but with the couplings inside $B$ set to $0$, i.e.,
\begin{equation}
M_{B^c}=\nu(dJ_{B^c})\ \kappa_{J_{B^c}}(d\Gamma)\times \kappa'_{J_{B^c}}(d\Gamma').
\end{equation}
Moreover, we define the operation $L_{J_B} : \Gamma \mapsto L_{J_B}\Gamma$ where
\begin{equation}
\label{eq: gamma L}
\Big(L_{J_B} \Gamma\Big)(f(\sigma))= \frac{\Gamma\Big(f(\sigma) \exp\Bigl(-\beta H_{B,J}(\sigma)\Bigr)\Big)}{\Gamma\Big(\exp\Bigl(-\beta H_{B,J}(\sigma)\Bigr)\Big)}\, ,
\end{equation}
which simply modifies the couplings within a finite subset $B$ of $\Z^d$.
\begin{lem}
\label{lem:deriv}
(Arguin-Newman-Stein-Wehr~\cite{ANSW14}) Let $F_\Lambda$ be as in \eqref{eq: F} and $B\subset \Lambda$ finite. Then for any edge $(x,y)$ in $B$, we have
\begin{equation}
\frac{\partial}{\partial J_{xy}} M\big(F_\Lambda | J_B\big)= \beta \ M_{B^c} \big( L_{J_B}\Gamma(\sigma_x\sigma_y)-  L_{J_B}\Gamma'(\sigma_x\sigma_y)\big) \text{ $\nu$-a.s.}
\end{equation}
\end{lem}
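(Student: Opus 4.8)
The plan is to differentiate the explicit formula for $F_\Lambda$ given in~\eqref{eq: F} with respect to the coupling $J_{xy}$, using the fact that $H_{\Lambda,J}$ depends on $J_{xy}$ through the single term $-J_{xy}\sigma_x\sigma_y$. First I would address the subtle point that we are differentiating a conditional expectation $M(F_\Lambda\mid J_B)$, not $F_\Lambda$ itself, so the order of differentiation and conditional expectation must be justified. The key structural observation is that conditioning on $J_B$ and then varying $J_{xy}$ (an edge inside $B$) is what makes the formula clean: by the construction of the measure $M$ in~\eqref{eq: M} together with the covariance/invariance properties of the metastates $\kappa_J,\kappa'_J$ inherited from finite-volume Gibbs measures, one can re-express the conditional expectation in terms of the measure $M_{B^c}$ (where the couplings in $B$ are set to zero) acting on the tilted states $L_{J_B}\Gamma$ and $L_{J_B}\Gamma'$ defined in~\eqref{eq: gamma L}. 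Concretely, since $\kappa_J$ is obtained as a limit of finite-volume Gibbs states, reinserting the couplings $J_B$ by hand via the operator $L_{J_B}$ reproduces the correct conditional law, so that
\begin{equation}
M\big(F_\Lambda\mid J_B\big)=M_{B^c}\Big(F_\Lambda(J, L_{J_B}\Gamma, L_{J_B}\Gamma')\Big)\ \text{$\nu$-a.s.}
\end{equation}
This is precisely the content imported from Lemma~5.4 of~\cite{ANSW14}, and I would invoke it as the starting identity.

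Given this identity, the $J_{xy}$-dependence is now carried entirely by the tilted states inside the $M_{B^c}$-expectation, which no longer integrates over the couplings in $B$. The next step is to differentiate under the integral sign with respect to $J_{xy}$. For a single fixed pair of states, a direct computation using~\eqref{eq: F} and~\eqref{eq: gamma L} gives
\begin{equation}
\frac{\partial}{\partial J_{xy}}F_\Lambda(J,L_{J_B}\Gamma,L_{J_B}\Gamma')=\beta\,\Big(L_{J_B}\Gamma(\sigma_x\sigma_y)-L_{J_B}\Gamma'(\sigma_x\sigma_y)\Big),
\end{equation}
since differentiating $\log\Gamma(\exp\beta H_{\Lambda,J})$ in $J_{xy}$ brings down a factor $\beta\,\partial H_{\Lambda,J}/\partial J_{xy}=-\beta\,\sigma_x\sigma_y$, whose Gibbs-tilted average against $L_{J_B}\Gamma$ is exactly the two-spin expectation (the sign from $H$ and the sign from the $\log$-ratio combine to give the stated $+\beta$ prefactor). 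Interchanging the derivative with the $M_{B^c}$-integration then yields the claimed formula.

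The technical heart of the proof, and the step I expect to be the main obstacle, is justifying the interchange of $\partial/\partial J_{xy}$ with the $M_{B^c}$-expectation, together with the earlier interchange with the conditioning. A dominated-convergence argument requires a uniform-in-$J_{xy}$ integrable bound on the difference quotients of $F_\Lambda$; here the two-spin correlators $L_{J_B}\Gamma(\sigma_x\sigma_y)$ are bounded by $1$ in absolute value, which makes the integrand and its $J_{xy}$-derivative uniformly bounded, so integrability is not delicate once the representation is in hand. The genuine work is in the measure-theoretic identity of the first paragraph: one must verify that the metastate is covariant under the coupling modification implemented by $L_{J_B}$, so that conditioning on $J_B$ and applying the tilt are consistent. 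Since this is exactly the smoothness statement established as Lemma~5.4 in~\cite{ANSW14}, I would lean on that result rather than reprove it, and the remaining argument reduces to the two elementary differentiations above. I would close by noting that the formula holds $\nu$-almost surely because the underlying conditional-expectation identity and the differentiability both hold off a $\nu$-null set of coupling realizations.
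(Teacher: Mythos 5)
Your overall strategy coincides with the one the paper intends for this lemma: the paper itself defers the proof to~\cite{ANSW14}, but its proof of the zero-temperature analogue (Lemma~5.3) follows exactly your three steps, namely (i)~use coupling covariance to write $M(F_\Lambda\mid J_B)=M_{B^c}\bigl(F_\Lambda(J,L_{J_B}\Gamma,L_{J_B}\Gamma')\bigr)$ so that all dependence on the couplings in $B$ becomes explicit, (ii)~differentiate the integrand pointwise in $J_{xy}$, and (iii)~interchange derivative and expectation by dominated convergence, which is indeed unproblematic once the pointwise derivative is bounded. So the architecture of your argument is sound and matches the paper's.

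The gap is in step~(ii), the only computation you actually carry out. After the covariance step, the $J_{xy}$-dependence is \emph{not} ``carried entirely by the tilted states'': it sits both in the explicit factor $\exp(\beta H_{\Lambda,J})$ inside $F_\Lambda$ and inside the operator $L_{J_B}$ (through $\exp(-\beta H_{B,J_B})$), and the entire content of the computation is how these two combine. Differentiating $\log (L_{J_B}\Gamma)\bigl(\exp\beta H_{\Lambda,J}\bigr)$ only through the explicit $H_{\Lambda,J}$, holding the tilted state fixed as you do, produces
\begin{equation}
-\beta\,\frac{(L_{J_B}\Gamma)\bigl(\sigma_x\sigma_y\, e^{\beta H_{\Lambda,J}}\bigr)}{(L_{J_B}\Gamma)\bigl(e^{\beta H_{\Lambda,J}}\bigr)}\,,
\end{equation}
which is the two-point function in the state further reweighted by $e^{\beta H_{\Lambda,J}}$ (i.e., with the couplings in $\Lambda$ removed), \emph{not} $(L_{J_B}\Gamma)(\sigma_x\sigma_y)$ as you assert. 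The correct route is to use the cancellation
\begin{equation}
(L_{J_B}\Gamma)\bigl(e^{\beta H_{\Lambda,J}}\bigr)
=\frac{\Gamma\bigl(e^{\beta (H_{\Lambda,J}-H_{B,J_B})}\bigr)}{\Gamma\bigl(e^{-\beta H_{B,J_B}}\bigr)}\,,
\end{equation}
whose numerator involves only couplings on edges of $\Lambda$ outside $E(B)$ and hence is independent of $J_{xy}$; all the $J_{xy}$-dependence lives in the normalization, and differentiating it gives $-\beta\,(L_{J_B}\Gamma)(\sigma_x\sigma_y)$ for the $\Gamma$-term and $+\beta\,(L_{J_B}\Gamma')(\sigma_x\sigma_y)$ for the $\Gamma'$-term. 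Note that this honest bookkeeping yields the prefactor $-\beta$, not $+\beta$, on $L_{J_B}\Gamma(\sigma_x\sigma_y)-L_{J_B}\Gamma'(\sigma_x\sigma_y)$ under the conventions \eqref{eq: F} and \eqref{eq: gamma L} of this paper, which is consistent with the minus sign appearing in the zero-temperature Lemma~5.3; your parenthetical claim that ``the sign from $H$ and the sign from the $\log$-ratio combine to give the stated $+\beta$'' is not a computation and does not hold with these conventions (either the displayed statement inherits different sign conventions from~\cite{ANSW14}, or there is a sign slip in the transcription, but a proof must confront this rather than wave at it).
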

%Here we define the operation $L_{J_B} : \Gamma \mapsto L_{J_B}\Gamma$ where
%\begin{equation}
%\label{eq: gamma L}
%\Big(L_{J_B} \Gamma\Big)(f(\sigma))= \frac{\Gamma\Big(f(\sigma) \exp\Bigl(-\beta H_{B,J}(\sigma)\Bigr)\Big)}{\Gamma\Big(\exp\Bigl(-\beta H_{B,J}(\sigma)\Bigr)\Big)}\, ,
%\end{equation}
%Note that the coupling modification is simply a change of density. In particular, it is easy to check that $\Gamma$ and $L_{\Delta J}\Gamma$ are equivalent.
%which simply modifies the couplings within a finite subset $B$ of $\Z^d$.

With this lemma in hand, the desired lower bound can be obtained from a martingale decomposition as noted above; we refer the reader to~\cite{ANSW14} for details.
Here we simply state the main result of~\cite{ANSW14}, which is a lower bound on the variance of fluctuations of  $F_\Lambda$ under the measure $M$, if incongruent states
are present. That is, we assume the following is true:
\begin{ass}
\label{posass}
There exists an edge $(x,y)\in E(\Z^d)$ such that 
\begin{equation}
\label{eq: posass}
\nu\Bigl\{ J: \kappa_J\big(\Gamma(\sigma_x\sigma_y)\big)\neq \kappa'_J\big(\Gamma(\sigma_x\sigma_y)\big)\Bigr\}>0\ \,\,\,\,\,\,\,\,\,{\rm (assumption\, of\, incongruence)}\, .
\end{equation}
\end{ass}

Our main result was:
\begin{thm}
\label{thm:lower}
If Assumption \ref{posass} holds, then there exists a constant $c>0$ such that for any $\Lambda=[-L,L]^d \subset \Z^d$
the variance of $F_\Lambda$ under $M$ satisfies
\begin{equation}
\text{Var}_M\Big(F_\Lambda\Big)\geq c |\Lambda |\ .
\end{equation}
\end{thm}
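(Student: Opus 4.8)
The plan is to bound $\text{Var}_M(F_\Lambda)$ from below by the variance of its conditional expectation given all the couplings, and then to decompose the latter as an orthogonal sum of martingale increments indexed by blocks tiling $\Lambda$. First I would fix a block $B_0$ of side length $\ell$ independent of $L$ and partition $\Lambda$, up to a negligible boundary layer, into translates $B_1,\dots,B_N$ of $B_0$, with $N \asymp |\Lambda|/\ell^d$. Setting $\mathcal{F}_k = \sigma(J_{B_1},\dots,J_{B_k})$ and $X_k = M(F_\Lambda \mid \mathcal{F}_k)$, the tower property makes $(X_k)$ a martingale, so by orthogonality of the increments $\Delta_k = X_k - X_{k-1}$ together with the law of total variance,
\[
\text{Var}_M(F_\Lambda) \ge \text{Var}_M\big(M(F_\Lambda \mid J_\Lambda)\big) = \sum_{k=1}^N \text{Var}_M(\Delta_k).
\]
It then suffices to show that each increment contributes at least a fixed positive amount, that is, $\text{Var}_M(\Delta_k) \ge c_0 > 0$ uniformly in $k$ and $L$.

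The second step is to estimate a single increment using Lemma~\ref{lem:deriv}. Conditionally on $\mathcal{F}_{k-1}$, the quantity $X_k$ is a smooth function of the couplings $J_{B_k}$, and $\Delta_k = X_k - M(X_k \mid \mathcal{F}_{k-1})$ has conditional mean zero, so $\text{Var}_M(\Delta_k) = M\big[\text{Var}(X_k \mid \mathcal{F}_{k-1})\big]$. Applying the differentiation lemma with $B = B_1 \cup \dots \cup B_k$ to the distinguished edge $(x,y)$ translated into $B_k$, the derivative of $X_k$ with respect to $J_{xy}$ equals $\beta$ times the $M_{B^c}$-average of the two-spin correlation difference $L_{J_B}\Gamma(\sigma_x\sigma_y) - L_{J_B}\Gamma'(\sigma_x\sigma_y)$. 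Comparing two independent copies of $J_{B_k}$ that differ only in the coordinate $J_{xy}$ and integrating this derivative via the fundamental theorem of calculus expresses the fluctuation of $X_k$ through the integrated correlation difference; the identity $\text{Var}_\nu(g) = \tfrac12 \iint (g-g')^2\,\nu\,\nu'$ then converts this into a lower bound on $\text{Var}(X_k \mid \mathcal{F}_{k-1})$, provided the correlation difference stays bounded away from zero over an interval of couplings of positive $\nu$-measure.

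The third step is to make the bound uniform in $k$ via the translation covariance of the periodic-boundary-condition metastate together with the translation invariance of $\nu$. Since $B_k$ is a translate of $B_0$ and the coupling law is i.i.d., the joint law of $(J_{B_k},\Gamma,\Gamma')$ entering the increment is, up to the translation, identical for every interior block, so $\text{Var}_M(\Delta_k)$ does not depend on $k$. This reduces everything to a single base estimate, which is where incongruence enters: Assumption~\ref{posass} guarantees that $\kappa_J(\Gamma(\sigma_x\sigma_y)) \neq \kappa'_J(\Gamma(\sigma_x\sigma_y))$ on a set of couplings of positive measure, so after choosing $\ell$ large enough that the block average of this difference is genuinely nonzero, the integrated correlation difference of Step~2 is of order one with positive probability, yielding $c_0 > 0$. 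Summing over the $N \asymp |\Lambda|$ blocks gives $\text{Var}_M(F_\Lambda) \ge c_0 N \ge c|\Lambda|$.

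The step I expect to be the main obstacle is the passage, within the second and third steps, from the pointwise incongruence of Assumption~\ref{posass} to a genuine nonvanishing of the integrated two-spin correlation difference over an interval of couplings. The derivative furnished by Lemma~\ref{lem:deriv} involves the reweighted states $L_{J_B}\Gamma$ and the measure $M_{B^c}$ with the block couplings switched off, whereas the hypothesis concerns the bare metastate averages $\kappa_J(\Gamma(\sigma_x\sigma_y))$; reconciling the two — controlling the sign of the difference so that the fluctuations do not cancel, and verifying that a block of fixed size already detects the incongruence — is the delicate part, and is precisely what forces the introduction of the block scale $\ell$ and the use of the metastate's covariance properties.
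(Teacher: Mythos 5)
Your proposal follows essentially the same route as the paper (which itself defers the details of Theorem~\ref{thm:lower} to~\cite{ANSW14}): a martingale decomposition of $M(F_\Lambda\mid\F_k)$ over fixed-size blocks tiling $\Lambda$, orthogonality of increments to reduce to per-block variances, translation covariance of the metastate and of $\nu$ to make those variances independent of $k$, and Lemma~\ref{lem:deriv} plus Assumption~\ref{posass} to show each is a positive constant. The one step you flag as the main obstacle---upgrading the pointwise incongruence of the metastate averages to a nonvanishing contribution at a finite block scale $\ell$---is exactly the point the paper's argument settles via the Martingale Convergence Theorem: the conditional expectation of the correlation difference given the couplings in a block converges a.s.\ as the block grows to a limit that is nonzero with positive probability by Assumption~\ref{posass}, so a fixed sufficiently large $\ell$ already makes the derivative of the increment a nonzero random variable, hence its variance strictly positive.
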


Our goal here is to extend Theorem~\ref{thm:lower} to energy fluctuations between incongruent ground states at zero temperature.  This extension is carried out in Section 5---see Theorem~\ref{thm:zerolower}.

\section{Covariance Properties of the Metastate}
\label{sec:covariance}

The complete proof of Lemma~\ref{lem:deriv} appears in~\cite{ANSW14}. The essence of the proof relies heavily on the covariance properties of the metastate at nonzero temperature. Because these
will also be essential in the argument presented below, we recount them here. 

\begin{df}
\label{df: metastate}
A metastate $\kappa_{\cdot}$ for the EA Hamiltonian on $\Z^d$ is a measurable mapping 
\begin{equation}
\begin{aligned}
\R^{E(\Z^d)} &\to \mathcal M_1(\Sigma)\\
J &\mapsto \kappa_{J}
\end{aligned}
\end{equation}
with the following properties:

\medskip 

\begin{enumerate}

\item {\bf Support on Gibbs states.} Every state sampled from $\kappa_J$ is a Gibbs state for the realization for the couplings. Precisely,
\begin{equation}
\kappa_{J}\Bigl(\mathcal G_{J}\Bigr)=1 \ \text{$\nu$-a.s.}\nonumber
\end{equation}

\smallskip 

\item {\bf Coupling Covariance. } For  $B\subset \Z^d$ finite, $J_B\in \R^{E(B)}$, and any measurable subset $A$ of  $\mathcal M_1(\Sigma)$,
\begin{equation}
\kappa_{J+J_B}(A)= \kappa_{J}(L_{J_B}^{-1}A)\nonumber
\end{equation}
where $L_{J_B}^{-1} A=\Bigl\{\Gamma\in \mathcal M_1(\Sigma): L_{J_B}\Gamma \in A\Bigr\}$.

\smallskip

\item {\bf Translation Covariance.} For any translation $T$ of $\Z^d$ and any measurable subset $A$ of  $\mathcal M_1(\Sigma)$
\begin{equation}
\kappa_{TJ}(A)=\kappa_{J}(T^{-1}A)\ .\nonumber
\end{equation}
\end{enumerate}
\end{df}

The coupling covariance is useful in making explicit the dependence of the metastate on $J_B$. This is crucial because the differentiation in Lemma~\ref{lem:deriv} has to
take into account the dependence of $F_\Lambda$ on all of the couplings. We therefore need a zero-temperature object that possesses properties similar to the ordinary metastate.
One cannot simply generalize to zero temperature, however, because the useful property of coupling covariance no longer holds. What is needed is
a probability measure on ground states that keeps track of how ground states change under local coupling modifications. This quantity is the excitation metastate,
which we now  describe.

\section{Excitation Metastate}
\label{sec:excitation}

In this section we define the excitation metastate~\cite{NS01,ADNS10} and study its covariance properties under translations and changes in couplings.
We begin by defining the needed measure in a finite volume $\Lambda$, and then define the limiting object as $\Lambda\to\Z^d$. 
Fix a box $B\subset \Lambda$. The ground state configuration in $B$ can be chosen as follows. Let $\sigma^\eta$ denote the minimizer of $H_{\Lambda,J}$ over the spin configurations of $\Lambda$ that are equal to $\eta$ on $B$.
Then the spins of the ground state in $B$ can be determined among all $\eta\in\{-1,+1\}^B$  as follows:
\begin{equation}
\text{$\sigma|_B$ is the unique $\eta\in\{-1,1\}^B$ such that $H_{\Lambda,J}(\sigma^\eta)-H_{\Lambda,J}(\sigma^{\eta'})<0$ for all $\eta'\neq \eta$.}
\end{equation}
Now let $J_B$ be a configuration of the couplings that is $0$ for edges outside $B$. Write $\sigma(J_B)$ for the ground state in $\Lambda$ for the disorder $J+J_B$. 
Again, the value of the spins in $B$ for the ground state $\sigma(J_B)$ can be determined as follows:

\medskip

$\sigma(J_B)\Big|_B$ is the unique $\eta\in\{-1,1\}^B$ such that
\begin{equation}
H_{\Lambda,J}(\sigma^\eta)-H_{\Lambda,J}(\sigma^{\eta'})+H_{B,J_B}(\eta)-H_{B,J_B}(\eta')<0 \text{ for all $\eta'\neq \eta$\, ,}
\end{equation}
which follows because
\begin{equation}
H_{\Lambda,J+J_B}(\sigma^\eta)=H_{\Lambda,J}(\sigma^\eta)+H_{B,J_B}(\eta)\, .
\end{equation}
For conciseness, write
\begin{equation}
\Delta E_B(\eta, \eta')=H_{\Lambda,J}(\sigma^\eta)-H_{\Lambda,J}(\sigma^{\eta'})
\end{equation}
An elementary decoupling argument shows that the difference of energy is bounded uniformly in $\Lambda$:
\begin{equation}
|\Delta E_B(\eta, \eta')|\leq |H_{J,B}(\eta)-H_{J,B}(\eta')|\, .
\end{equation}
In particular, the above shows that the sequence of the distributions of $\Delta E_B(\eta, \eta')$ (induced by that of $J$) over the sequence of volumes is tight.
These observations lead to the existence of a limiting measure.

\begin{lem}
Fix $B\subset \Z^d$ finite. 
For every $\Lambda\subset \Z^d$ consider the joint distribution of  $\big(J,\vec{\sigma}_B, \Delta\vec{E}_B\big)$ where $\vec{\sigma}_B=\Big(\sigma^{\eta}, \eta\in\{-1,1\}^B\Big)$ and $\Delta\vec{E}_B=\Big(\Delta E_B(\eta, \eta'); \eta,\eta'\in\{-1,1\}^B\Big)$ as constructed above.
There exists a subsequence of volumes such that the distributions converge weakly. 
\end{lem}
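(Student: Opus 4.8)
The plan is to realize the claim as a standard tightness-plus-Prokhorov argument, the only real input being the uniform energy bound established just above. First I would fix the Polish space in which the random vector $(J,\vec{\sigma}_B,\Delta\vec{E}_B)$ lives. The couplings $J$ range over $\R^{E(\Z^d)}$; each minimizer $\sigma^\eta$, after an arbitrary but fixed extension of the spins outside $\Lambda$, may be regarded as an element of the compact space $\Sigma=\{-1,+1\}^{\Z^d}$, so that $\vec{\sigma}_B$ lives in the compact product $\Sigma^{2^{|B|}}$; and the finitely many gaps $\Delta E_B(\eta,\eta')$ form a vector in a finite-dimensional Euclidean space $\R^{m}$, where $m$ is the number of pairs $(\eta,\eta')$. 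The product of these factors is Polish, so by Prokhorov's theorem it suffices to verify that the family of joint laws indexed by $\Lambda$ is tight.

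Tightness I would check factor by factor, since a product of compact sets whose complement probabilities are controlled separately yields joint tightness. The $J$-marginal equals the fixed product measure $\nu^{\otimes E(\Z^d)}$ for \emph{every} $\Lambda$ (all remaining coordinates are deterministic functions of $J$), and a single Borel probability measure on a Polish space is tight by inner regularity; hence for any $\e>0$ I can fix a compact $K_J$ with $\nu(K_J)\ge 1-\e/2$ uniformly in $\Lambda$. The spin coordinate needs nothing, as $\Sigma^{2^{|B|}}$ is already compact. For the energy coordinate I would invoke the uniform bound derived above, $|\Delta E_B(\eta,\eta')|\le |H_{B,J}(\eta)-H_{B,J}(\eta')|$: the right-hand side depends only on the finitely many couplings inside $B$, is independent of $\Lambda$, and satisfies $\E|H_{B,J}(\eta)-H_{B,J}(\eta')|<\infty$ by the moment assumption $\int J_{xy}^4\,\nu(dJ_{xy})<\infty$. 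A Markov (or Chebyshev) estimate then produces a radius $R$, uniform in $\Lambda$, with $\prob_\Lambda\big(\Delta\vec{E}_B\in[-R,R]^{m}\big)\ge 1-\e/2$.

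Combining, the set $K=K_J\times\Sigma^{2^{|B|}}\times[-R,R]^{m}$ is compact and has $\prob_\Lambda$-mass at least $1-\e$ for every $\Lambda$, so the family is tight. Prokhorov's theorem then gives relative compactness in the topology of weak convergence, and hence a weakly convergent subsequence along any exhausting sequence of boxes $\Lambda\uparrow\Z^d$. I expect the only genuine point of care to be the two non-compact factors $J$ and $\Delta\vec{E}_B$ together with the need for \emph{joint} rather than merely marginal convergence: the former is dispatched by the fixedness of the law $\nu^{\otimes E(\Z^d)}$ and by the uniform energy bound, while the latter is automatic once joint tightness is in hand, so no deeper obstacle arises.
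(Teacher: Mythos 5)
Your proposal is correct and follows essentially the same route as the paper, which justifies the lemma precisely by the uniform bound $|\Delta E_B(\eta,\eta')|\leq |H_{B,J}(\eta)-H_{B,J}(\eta')|$ giving tightness of the energy-difference laws, combined with compactness of the spin coordinates and the fixed law of $J$, and then compactness (Prokhorov) to extract a weakly convergent subsequence. Your write-up simply makes explicit the standard details (Polish product space, factor-by-factor tightness, Markov's inequality) that the paper leaves as remarks preceding the lemma.
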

We write $\kappa_J$ for the limiting conditional distribution on $(\vec{\sigma}_B, \Delta\vec{E}_B)$ given $J$. 
The subsequence of volumes can be picked so that the convergence holds jointly for all finite subsets $B\subset \Z^d$. 
For simplicity, we will restrict ourselves to a fixed $B$. 
We refer to the distribution $\kappa_J$ as an excitation metastate. Formally:
\begin{df}
\label{df: metastate}
An excitation metastate $\kappa_{\cdot}$ for the EA Hamiltonian on $\Z^d$ is a measurable mapping from $J$ to the set of probability measures
on $(\vec{\sigma}_B,\Delta\vec{E}_B)$ where $\vec{\sigma}_B=\Big(\sigma^{\eta}, \eta\in\{-1,1\}^B\Big)$ and $\Delta\vec{E}_B=\Big(\Delta E_B(\eta, \eta');\eta,\eta'\in\{-1,1\}^B\Big)$, with the following properties:

\medskip

\begin{enumerate}
\item {\bf Support on Ground States} 
Denote by $\sigma$ the unique configuration in $\Big(\sigma^{\eta}, \eta\in\{-1,1\}^B\Big)$ such that
\begin{equation}
\Delta E_B(\eta, \eta')<0 \qquad \text{ for every $\eta'\neq \eta$.}
\end{equation}
Then $\sigma$ is a ground state for the EA Hamiltonian on $\Z^d$ with disorder $J$.

\medskip

\item {\bf Coupling Covariance} 
Let $J_B$ be a a coupling that is $0$ for every edge not in $E(B)$. Denote by $\sigma(J_B)$ the unique configuration in $\{-1,1\}^{\Z^d}$ with the property that the configuration of spins in $B$ is given by the unique $\eta\in \{-1,1\}^B$ such that
\begin{equation}
\Delta E_B(\eta, \eta')+H_{B,J_B}(\eta)-H_{B,J_B}(\eta')<0 \qquad \text{ for every $\eta'\neq \eta$.}
\end{equation}
Then for any measurable function $F$ of the ground state $\sigma$
\begin{equation}
\int_{\{-1,1\}^{\Z^d}} F(\sigma) \ \kappa_{J+J_B}(d\sigma) =\int_{\{-1,1\}^{\Z^d}} F(\sigma(J_B)) \ \kappa_J(d\sigma)\, .
\end{equation}
Here we abuse  notation and write $\kappa_J(d\sigma)$ to denote the marginal distribution of $\sigma$ induced by $\kappa_J$. 

\medskip

\item {\bf Translation Covariance.} For any translation $T$ of $\Z^d$ and any measurable subset $A$ of  $\{-1,+1\}^{\Z^d}$
\begin{equation}
\kappa_{TJ}(A)=\kappa_{J}(T^{-1}A)\ .
\end{equation}
This is a direct consequence of the periodic boundary conditions. 
\end{enumerate}
\end{df}

We are now ready to prove the differentiation lemma. Consider the energy difference between two ground states in $\Lambda$.
Let
\begin{equation}
G_\Lambda(J,\sigma,\sigma')=H_{\Lambda, J}(\sigma)-H_{\Lambda, J}(\sigma').
\end{equation}
We consider this difference as a random variable when $\sigma$ and $\sigma'$ are two ground states sampled from two excitation metastates for the disorder $J=(J_{B^c},J_B)$; that is,
\begin{equation}
M=\nu(dJ)\ \kappa_J(d\sigma)\times \kappa'_J(d\sigma')\ .
\end{equation}
We write $M_{B^c}$ for the corresponding measure with the couplings inside $B$ set to $0$, i.e.,
\begin{equation}
M_{B^c}=\nu(dJ_{B^c})\ \kappa_{J_{B^c}}(d\sigma)\times \kappa'_{J_{B^c}}(d\sigma').
\end{equation}
\begin{lem}
\label{lem: deriv}
Recall the definition of $\sigma(J_B)$ in the coupling covariance part of Definition~\ref{df: metastate}. Then for every edge $(x,y)$ in $E(B)$,
\begin{equation}
\frac{\partial}{\partial J_{xy}} M\big(G_\Lambda(J,\sigma,\sigma')| J_B\big)=-M_{B^c}\big(\sigma_x(J_B)\sigma_y(J_B)-\sigma'_x(J_B)\sigma'_y(J_B)\big)
\text{ $J_B$-a.s.}
\end{equation}
\end{lem}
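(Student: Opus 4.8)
The plan is to mirror the strategy of Lemma~\ref{lem:deriv} from~\cite{ANSW14}, replacing the finite-temperature coupling covariance by the zero-temperature coupling covariance of the excitation metastate (Definition~\ref{df: metastate}, item~2). The central idea is to make the dependence of the conditional expectation on $J_B$ fully explicit by transferring it from the metastate onto the observable, and then to differentiate by an envelope (Danskin-type) argument, exploiting that each ground state minimizes the energy.

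First I would rewrite the conditional expectation. Writing $J=J_{B^c}+J_B$ and applying the coupling covariance relation to each of the two metastate integrals, I obtain
\begin{equation}
M\big(G_\Lambda(J,\sigma,\sigma')| J_B\big)
= M_{B^c}\Big( H_{\Lambda,J_{B^c}+J_B}\big(\sigma(J_B)\big)-H_{\Lambda,J_{B^c}+J_B}\big(\sigma'(J_B)\big)\Big),
\end{equation}
where $\sigma(J_B),\sigma'(J_B)$ are the ground states produced from $\kappa_{J_{B^c}},\kappa'_{J_{B^c}}$ by the modification rule of Definition~\ref{df: metastate}. Using the decoupling identity $H_{\Lambda,J_{B^c}+J_B}(\cdot)=H_{\Lambda,J_{B^c}}(\cdot)+H_{B,J_B}(\cdot)$ recorded above, the $J_B$-dependence of the integrand now lives in exactly two places: the configurations $\sigma(J_B),\sigma'(J_B)$, and the explicit term $H_{B,J_B}$.

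Next I would differentiate in $J_{xy}$ for a fixed edge $(x,y)\in E(B)$. The key observation is that $\sigma(J_B)\big|_B$ is the minimizer over $\eta\in\{-1,1\}^B$ of $\Delta E_B(\eta,\eta')+H_{B,J_B}(\eta)$, so $J_{xy}\mapsto H_{\Lambda,J_{B^c}+J_B}(\sigma(J_B))$ is, up to an additive constant, the lower envelope of the finitely many affine functions of $J_{xy}$ indexed by $\eta$; it is therefore concave, piecewise linear, and differentiable off a finite set of \emph{critical} values. At every point of differentiability the envelope theorem gives that the derivative equals the explicit partial derivative taken at the fixed minimizing configuration, and since $H_{B,J_B}=-\sum_{(x',y')\in E(B)}J_{x'y'}\sigma_{x'}\sigma_{y'}$ this equals $-\sigma_x(J_B)\sigma_y(J_B)$ (and $+\sigma'_x(J_B)\sigma'_y(J_B)$ for the primed term). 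The contribution of the moving configuration vanishes precisely because $\sigma(J_B)$ is a minimizer.

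Finally I would justify interchanging the derivative with the $M_{B^c}$-integral by dominated convergence, using the bound $|\Delta E_B(\eta,\eta')|\le |H_{B,J}(\eta)-H_{B,J}(\eta')|$ recorded above together with the fourth-moment hypothesis on $\nu$ to dominate the difference quotients; the continuity of $\nu$ guarantees that the set of critical $J_{xy}$ is $\nu$-null, so the a.e.\ derivative computed above is the derivative $M_{B^c}$-almost surely. Collecting the two terms yields
\begin{equation}
\frac{\partial}{\partial J_{xy}} M\big(G_\Lambda(J,\sigma,\sigma')| J_B\big)
=-M_{B^c}\big(\sigma_x(J_B)\sigma_y(J_B)-\sigma'_x(J_B)\sigma'_y(J_B)\big),
\end{equation}
as claimed. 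I expect the envelope and differentiability step — showing that the jumps of the ground state as $J_{xy}$ crosses critical couplings contribute nothing and that differentiation commutes with integration — to be the main obstacle, since it is exactly here that the zero-temperature argument departs from the smooth finite-temperature computation of~\cite{ANSW14}.
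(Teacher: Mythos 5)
Your proposal is correct and takes essentially the same route as the paper: you use coupling covariance to transfer the $J_B$-dependence onto the observable, and your envelope/Danskin step is exactly the paper's argument, which derives the derivative from the two-sided minimality sandwich $H_{\Lambda,J+\varepsilon_{xy}}(\sigma(J_B+\varepsilon_{xy}))\leq H_{\Lambda,J+\varepsilon_{xy}}(\sigma(J_B))$ and $H_{\Lambda,J}(\sigma(J_B))\leq H_{\Lambda,J}(\sigma(J_B+\varepsilon_{xy}))$, followed by dominated convergence and the continuity of $\nu$ to dispose of the critical couplings. The only cosmetic differences are that you cite the envelope theorem by name where the paper proves it by hand, and you invoke the fourth-moment hypothesis for domination, which is unnecessary since the difference quotients are bounded by $1$ (the affine pieces have slopes $\pm 1$).
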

The interpretation of the right hand side is straightforward. By the coupling covariance property, it has the same distribution as
\begin{equation}
M(\sigma_x\sigma_y-\sigma'_x\sigma'_y| J_B)\ .
\end{equation}
That is, the right hand side in the statement of the lemma in effect sets the coupling inside $B$ to $0$ but puts it back explicitly with the help of the mapping $J_B$. 
As we will show in the proof of Theorem~\ref{thm:zerolower}, by a martingale argument similar to that used in~\cite{ANSW14} for the positive temperature case, this quantity is strictly positive for $B$ large enough.
\medskip

\begin{proof}[\bf Proof of Lemma 5.3] Using coupling covariance, the left-hand side is the same as
\begin{equation}
\frac{\partial}{\partial J_{xy}} M\big(G_\Lambda(J,\sigma,\sigma')| J_B\big)
=\frac{\partial}{\partial J_{xy}} M_{B^c}\big(G_\Lambda(J,\sigma(J_B),\sigma'(J_B))\big).
\end{equation}
Without loss of generality, we can work with a single $H_{\Lambda,J}(\sigma(J_B))$ since $G_\Lambda$ is simply the difference of the two ground state energies.

The crucial observation is the following. Let $\varepsilon_{xy}$ be a small variation of the coupling for the edge $(x,y)$. We can write the derivative of the ground state energy as
\begin{equation}
\lim_{\varepsilon_{xy}\to 0} \frac{H_{\Lambda,J+\varepsilon_{xy}}(\sigma(J_B+\varepsilon_{xy}))-H_{\Lambda,J}(\sigma(J_B))}{\varepsilon_{xy}}\, .
\end{equation}
To ensure that we can interchange the derivative with the expectation $M_{B^c}$, we have to ensure that the ratio stays bounded as $\varepsilon_{xy}\to 0$. 
To do this, we use the fact that
\begin{equation}
\begin{aligned}
H_{\Lambda,J+\varepsilon_{xy}}(\sigma(J_B+\varepsilon_{xy}))&\leq H_{\Lambda,J+\varepsilon_{xy}}(\sigma(J_B))\\
H_{\Lambda,J}(\sigma(J_B))&\leq H_{\Lambda,J}(\sigma(J_B+\varepsilon_{xy}))\\
\end{aligned}
\end{equation}
by the definition of the ground state energy. We therefore obtain the bound
\begin{equation}
-\varepsilon_{xy}\sigma_x(J_B)\sigma_y(J_B)\leq H_{\Lambda,J+\varepsilon_{xy}}(\sigma(J_B+\varepsilon_{xy}))-H_{\Lambda,J}(\sigma(J_B))\leq -\varepsilon_{xy}\sigma_x(J_B+\varepsilon_{xy})\sigma_y(J_B+\varepsilon_{xy})
\end{equation}
This demonstrates two things: first, and most importantly, the ratio stays bounded as $\varepsilon_{xy}\to 0$; second, the derivative is 
\begin{equation}
\frac{\partial }{\partial J_{xy}}H_{\Lambda,J}(\sigma(J_B))=-\sigma_x(J_B)\sigma_y(J_B)\ .
\end{equation}
Here we used the fact that for almost all $J_B$, the ground state $\sigma(J_B)$ is locally constant by the coupling covariance property (recall that the distribution of $J_{xy}$ is continuous).
Finally, by the dominated convergence theorem, we have 
\begin{equation}
\frac{\partial}{\partial J_{xy}} M\big(G_\Lambda(J,\sigma,\sigma')| J_B\big)=-\sigma_x(J_B)\sigma_y(J_B)\, ,
\end{equation}
which proves the lemma.
\end{proof}
To state the main result of the paper, we need the following analog of Assumption~\ref{posass}:
\begin{ass}
\label{ass:zeroass}
There exists an edge $(x,y)\in E(\Z^d)$ such that 
\begin{equation}
\label{eq: zeroass}
\nu\Bigl\{ J: \kappa_J(\sigma_x\sigma_y)\neq \kappa'_J(\sigma_x\sigma_y)\Bigr\}>0\ \,\,\,\,\,\,\,\,\,{\rm (assumption\, of\, incongruence)}\, .
\end{equation}
\end{ass}
Our main result is as follows:
\begin{thm}
\label{thm:zerolower}
If Assumption \ref{ass:zeroass} holds, then there exists a constant $c>0$ such that for any $\Lambda=[-L,L]^d \subset \Z^d$
the variance of $G_\Lambda$ under $M$ satisfies
\begin{equation}
\text{Var}_M\Big(G_\Lambda\Big)\geq c |\Lambda |\ .
\end{equation}
\end{thm}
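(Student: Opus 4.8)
The plan is to mirror the two-component structure of the positive-temperature proof of Theorem~\ref{thm:lower}, replacing the genuine differentiability of $F_\Lambda$ by the almost-everywhere differentiability of $G_\Lambda$ furnished by Lemma~\ref{lem: deriv}. First I would partition the bulk of $\Lambda$ into $N$ disjoint translates $B_1,\dots,B_N$ of a single fixed finite box $B$, with $N\geq c_1|\Lambda|$, the size of $B$ to be chosen (independently of $\Lambda$) large enough that a certain conditional two-spin correlation difference is bounded away from zero in $L^2$. Writing $\mathcal F_k$ for the $\sigma$-algebra generated by the couplings in $B_1\cup\dots\cup B_k$, I would form the Doob martingale $X_k=M(G_\Lambda\mid\mathcal F_k)$ and combine orthogonality of its increments with the fact that conditioning does not increase variance to obtain
\[
\text{Var}_M(G_\Lambda)\ \geq\ \text{Var}_M(X_N)\ =\ \sum_{k=1}^N \E\big[(X_k-X_{k-1})^2\big].
\]
The entire problem is thereby reduced to bounding each summand below by a positive constant independent of both $k$ and $\Lambda$.

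For a single increment I would use $\E[(X_k-X_{k-1})^2]=\E\big[\text{Var}(X_k\mid\mathcal F_{k-1})\big]$ and study the dependence of $X_k$ on the couplings inside $B_k$. Applying Lemma~\ref{lem: deriv} with conditioning box $B_1\cup\dots\cup B_k$ shows that, for any edge $(x,y)\in E(B_k)$, the partial derivative $\partial_{J_{xy}}X_k$ equals minus the reduced-measure average of $\sigma_x\sigma_y-\sigma'_x\sigma'_y$ with the conditioning couplings reinstated; by coupling covariance this has the law of the conditional correlation difference $M(\sigma_x\sigma_y-\sigma'_x\sigma'_y\mid\mathcal F_k)$. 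Since enlarging the conditioning box only increases the $L^2$ norm of a conditional expectation, and since this object converges as the box grows to $\kappa_J(\sigma_x\sigma_y)-\kappa'_J(\sigma_x\sigma_y)$, Assumption~\ref{ass:zeroass} guarantees that—once $B$ is large enough to contain a translate of the incongruent edge—this $L^2$ norm is at least some fixed $\sqrt a>0$ for every $k$. Translation covariance of the excitation metastate together with translation invariance of $\nu$ (which propagate incongruence to every translate of the distinguished edge) make this bound uniform over the blocks, as in~\cite{ANSW14}.

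The main obstacle is the last step: converting this bound on $\partial_{J_{xy}}X_k$ into a genuine lower bound on $\text{Var}(X_k\mid\mathcal F_{k-1})$. A large $L^2$ norm of a derivative does not by itself force a large variance, so the structure particular to this setting must be used. At positive temperature one exploits that $\langle\sigma_x\sigma_y\rangle$ is a smooth \emph{monotone} function of $J_{xy}$; at zero temperature the analogue is the ground-state bond variable $\sigma_x(J_B)\sigma_y(J_B)$, which as a function of $J_{xy}$ is a single monotone jump from $-1$ to $+1$, so that its metastate average becomes a continuous nondecreasing function of $J_{xy}$ after the remaining couplings are integrated out. I would use this monotone structure, rather than mere non-vanishing of the derivative, to show that the accumulated change of $X_k$ across the coupling of the incongruent edge is bounded below: on the positive-probability event where $\kappa_J(\sigma_x\sigma_y)-\kappa'_J(\sigma_x\sigma_y)$ has a definite sign (one of the two signs necessarily occurs with positive probability under incongruence), continuity of the averaged correlation functions makes that sign persist on a neighborhood of coupling values, forcing a genuine variation of $X_k$ and hence $\text{Var}(X_k\mid\mathcal F_{k-1})\geq c_0>0$. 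Carrying out this lower bound carefully—controlling the averaged monotone correlations and ruling out cancellation from sign changes—is where the real work lies; granting it, summing over the $N\geq c_1|\Lambda|$ blocks yields $\text{Var}_M(G_\Lambda)\geq c|\Lambda|$.
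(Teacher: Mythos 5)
Your setup---dividing $\Lambda$ into $N\geq c_1|\Lambda|$ translates of a fixed block, the Doob martingale $X_k=M(G_\Lambda\mid\F_k)$ with orthogonal increments, Lemma~\ref{lem: deriv} plus coupling covariance to identify $\partial_{J_{xy}}X_k$ in law with $-M(\sigma_x\sigma_y-\sigma_x'\sigma_y'\mid\F_k)$, and the Martingale Convergence Theorem plus Assumption~\ref{ass:zeroass} to choose the block size so that this derivative is a nonzero random variable---is exactly the paper's argument. The divergence, and the genuine gap, is the final step, which you yourself flag as ``where the real work lies'' and do not carry out. Your stated obstacle (``a large $L^2$ norm of a derivative does not by itself force a large variance'') is true as a general statement but misdirected here, and the monotonicity/sign-persistence machinery you invoke is not what is needed. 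The required inference is qualitative, not quantitative: if $\text{Var}_M(X_k-X_{k-1})$ were zero, the increment would vanish $M$-a.s., hence its derivative in $J_{xy}$ (which exists a.s.\ and is bounded, by the bracketing argument in the proof of Lemma~\ref{lem: deriv}) would vanish a.s., contradicting the choice of block size. So each summand in the martingale decomposition is strictly positive.

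What upgrades ``strictly positive'' to ``bounded below by a constant independent of $k$ and of $\Lambda$''---which is the legitimate core of your worry, since a priori a positive variance could degenerate as $\Lambda$ grows---is an observation missing from your proposal: the right-hand side of Lemma~\ref{lem: deriv} is an infinite-volume object, built from the excitation metastate alone, with no reference to $\Lambda$ at all. Hence the entire gradient of $X_k$ in the block couplings is $\Lambda$-independent; since the increment $X_k-X_{k-1}$ has conditional mean zero given $\F_{k-1}$, the increment itself (not merely its derivative) is pinned down by this $\Lambda$-independent gradient together with the centering, so its variance is one fixed number; translation covariance of the excitation metastate and of $\nu$ makes that number the same for every block. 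This is precisely why the paper can assert that each variance ``is a nonzero constant'' and then sum over $N\geq c_1|\Lambda|$ blocks. Your alternative route---sign persistence for a difference of two monotone jump functions averaged over the metastate---would still have to establish uniformity in $\Lambda$ (for which you would end up needing the same invariance observation anyway) and to rule out the cancellation between the two monotone pieces that you acknowledge; as written, the proposal does not close.
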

\begin{proof}[\bf Proof of Theorem~\ref{thm:zerolower} (Sketch)]
In essence, until now we have been preparing for the proof.  The idea is to repeat the proof of the main result of \cite{ANSW14}, replacing the quantity $F_{\Lambda}$ used there by $G_{\Lambda}$ introduced above.  The pattern of the proof in \cite{ANSW14} is followed very closely, so we  present only a sketch.  As in \cite{ANSW14},  we divide $\Lambda$ into blocks  $B_1,\dots, B_N$ of equal size.  The size of the blocks is chosen  independently of $\Lambda$, so as to make the contribution from each of the blocks positive (and equal).  It follows that the number of the blocks $N$ is of the order of the size of $\Lambda$:  $N = C|\Lambda|$.  To carry out the estimate, we introduce the $\F_k$ generated by the couplings $J_{B_i}$, $i\leq k$,  and use a martingale decomposition to obtain:
\begin{equation}
\label{eq:martdiff}
\text{Var}_M ( G_\Lambda)\geq \sum_{k=1}^N \text{Var}_M\Bigl(M( G_\Lambda| \F_k) - M( G_\Lambda| \F_{k-1})\Bigr)\, .
\end{equation}
To show that  block size can be chosen to yield  nonzero contributions from the terms on the right-hand side, we use Lemma 5.3, together with Assumption 3.2.  As in \cite{ANSW14}, we can use the Martingale Convergence Theorem to choose the size of the blocks, so that the derivative of the martingale difference $M(G_\Lambda| \F_k)- M( G_\Lambda| \F_{k-1})$ with respect to the coupling at the center of the block is a nonzero random variable.  It follows that each variance on the right-hand side of the above inequality is a nonzero constant, so that their sum is proportional to $|\Lambda|$, as claimed.
\end{proof}

\section{Some Remarks on Ground State Scenarios in Two Dimensions}
\label{sec:2D}

In any dimension, an almost sure upper bound on $G_{\Lambda}$ always holds by decoupling the boundary:
\begin{equation}
\Bigl| G_\Lambda(J,\sigma,\sigma')\Bigr| \leq 4 \sum_{e\in \partial \Lambda} |J_e|\ , \text{ $M$-a.s.}
%\text{Var}_M\Big(G_\Lambda\Big)\leq C |\Lambda |\ ,
\end{equation}
In addition, a lower bound on the moment generating function of $G_{\Lambda}$ can be proved as in \cite{ANSW14}, using an argument similar to the proof of the martingale central limit theorem.
This yields
\begin{equation}
\label{eq: lower thm}
\int\nu(dJ_\Lambda)\  \exp \left(t\ \frac{M( G_\Lambda| J_\Lambda)}{|\partial \Lambda|}\right) \geq e^{c t^2}\ .
\end{equation}
As in \cite{ANSW14}, the two bounds lead to a contradiction for some {\it a priori\/} scenarios for the excitation metastate structure in two dimensions. 
In particular, suppose a metastate~$\kappa$ is supported on a countable (finite or infinite) set of ground state pairs $(\sigma^{(n)}, -\sigma^{(n)})$, with $-\sigma$ denoting the global flip of~$\sigma$.  Although the $\sigma^{(n)}$ depend on the coupling realization, by a straightforward application of the ergodic theorem the $\kappa$-measure of the $n^{\rm th}$ pair is an a.s.~constant number $p_n$.  It follows from our results that $p_m$ cannot be different from $p_n$ for any $m$ and $n$, by the same reasoning that led to a similar conclusion for positive-temperature pure state pairs in~\cite{ANSW14} (cf.~Sect.~7).

It follows that no metastate supported on a countable infinity of ground state pairs is possible---in such a metastate some $p_n$ would have to be distinct, since $\sum_n p_n = 1$.  As for metastates supported on a finite number of ground state pairs, the only ones that are not eliminated by the results presented here are those with $p_n = {1 \over M}$ for a finite~$M$.  We finish with an open problem:  can one prove that such a symmetric metastate cannot exist for $M > 1$ and, as a consequence, the number of ground state pairs in the support of the metastate is either one or (an uncountable) infinity?  Analogous results are known for random ferromagnets in any dimension~\cite{WW16} and for the Ising spin glass on a half-plane~\cite{ADNS10,AD11}.

\end{document}